\newcommand{\erfc}{\operatorname{erfc}}
\newlength\figureheight 
\newlength\figurewidth 
\newcommand{\e}[1]{{\mathbb E}\left[ #1 \right]}
\def\delequal{\mathrel{\ensurestackMath{\stackon[1pt]{=}{\scriptstyle\Delta}}}}
\DeclarePairedDelimiterX\MeijerM[3]{\lparen}{\rparen}%
{\begin{smallmatrix}#1 \\ #2\end{smallmatrix}\delimsize\vert\,#3}
\newcommand\MeijerG[8][]{%
  G^{\,#2,#3}_{#4,#5}\MeijerM[#1]{#6}{#7}{#8}}
\newcommand\MeijerG*[7]{%
  G^{\,#1,#2}_{#3,#4}\MeijerM*{#5}{#6}{#7}}
\newcommand\ddfrac[2]{\frac{\displaystyle #1}{\displaystyle #2}}
\begin{document}
%
% paper title
% can use linebreaks \\ within to get better formatting as desired
\title{Performance Analysis of Mixed RF/FSO Relaying under HPA Nonlinearity and IQ Imbalance}

\author{Elyes Balti, \textit{Student Member,} IEEE}
\maketitle
\begin{abstract}
In this paper, we present the performance analysis of asymmetric dual-hop RF/FSO system with multiple relays. The RF channels follow the correlated Rayleigh fading while the optical links are subject to the Gamma-Gamma fading. To select the candidate relay to forward the communication, we assume Partial Relay Selection (PRS) with outdated Channel State Information (CSI). Unlike the vast majority of work in this area, we introduce the impairments to the relays and the destination. We will propose three impairment models called Soft Envelope Limiter (SEL), Traveling Wave Tube Amplifier (TWTA) and IQ Imbalance in order to compare the resilience of our system with the RF one against the hardware impairments. Closed-from of the outage probability (OP) is derived in terms of Meijer's G function as well as the upper bound of the ergodic capacity (EC). The Bit Error Rate (BER) and the exact EC are evaluated numerically. Finally, analytical and numerical results are presented and validated by Monte Carlo simulation.
\end{abstract}

\begin{IEEEkeywords}
Soft Envelope Limiter, Traveling Wave Tube Amplifier, IQ Imbalance, Amplify-and-Forward, Partial Relay Selection, Outdated CSI, $\alpha-\mu$ fading.
\end{IEEEkeywords}

\IEEEpeerreviewmaketitle
\section{Introduction}
Wireless optical communications also known as Free-Space Optical (FSO) is considered as the key stone for the next generation of wireless communication since it has recently gained enormous attention for the vast majority of the most well-known networking applications such as fiber backup, disaster recoveries and redundant links \cite{1}. The main advantages of employing the FSO is to reduce the power consumption and provide higher bandwidth. Moreover, FSO becomes as an alternative or a complementary to the RF communication as it overcomes the problems of the spectrum scarcity and its licence access to free frequency band. In this context, many previous attempts have leveraged some these advantages by introducing the FSO into classical systems to be called Mixed RF/FSO systems. This new system architecture reduces not only the interferences level but also it offers full duplex Gegabit Ethernet throughput and high network security \cite{4}. Although the literature has shown the superiority of the mixed RF/FSO systems over the classical RF systems, they still suffer from the reliability scarcity and power efficient coverage. To overtake this difficulty, previous research attempts have proposed cooperative relaying techniques hybridized with the mixed RF/FSO systems since it improves not only the capacity of the wireless system but also it offers high Quality of Service (QoS). Recently, this new efficient system model has attracted considerable attention in particular using various relaying schemes. The most common used relaying techniques are Decode-and-Forward, Amplify-and-Forward, and Quantize-and-Encode \cite{9}, \cite{10}. Regarding the system with multiple relays, activating all relays to simultaneously forward the communication is not recommended because the problem of synchronization at the reception always occurs with optical communications. To solve this problem, only one relay is allowed to transmit the signal. In this case, a relay selection protocol is required to select this candidate relay. In the literature, there are many protocols previously proposed such as opportunistic relay selection, distributed switch and stay, max-select protocol \cite{12}. Unlike these protocols which require the knowledge of the total CSIs of the channels, Krikidis \textit{et al.} have proposed PRS in \cite{14} which requires the CSI of only one channel (source-relay or relay-destination). Unlike the slow time-varying channels, the rapid time-varying channels are characterized by high time-varying CSIs. In this case, the CSI used for relay selection is different from the CSI used for signal transmission, so the CSI is outdated due to the slow feedback coming from the relays. Unlike \cite{14} where the PRS is assumed with perfect CSI estimation, outdated CSI of Rayleigh and Nakagami-m fading is assumed in \cite{2} and \cite{16}. In spite of these considerable contributions in the area of mixed RF/FSO systems, they assumed ideal system without hardware impairments. In practice, however, the hardware (source, relays, destination) are susceptible to impairments, e.g., HPA non-linearities \cite{17} phase noise \cite{19} and IQ imbalance \cite{20}. Due to its low quality and price, the relay suffers from the non-linear PA impairment which is caused primarily by the non-linear amplification of the signal that may cause a distortion and a phase rotation of the signal. The most well-known non-linear PA model are TWTA, SEL \cite{22} and Ideal Soft Limiter Amplifier (ISLA) \cite{23}. Maletic \textit{et al.} \cite{22} concluded that the SEL has less severe impact on the system performance than the TWTA model. Furthermore, there are few attempts considering mixed RF/FSO system affected by a general model of impairments but they did not specify the type/nature of the hardware impairments. In this work, we propose a mixed RF/FSO system with multiple relays employing Fixed Gain (FG) relaying. PRS based on the CSI of the first hop is assumed with outdated CSI for relay selection and both the relays and the destination are respectively affected by non-linear power amplification (NLPA) and IQ imbalance. The rest of this paper is organized as follows: the system model is presented in section II. The performance analysis are detailed in section III. Section IV discusses the numerical and simulation results while the concluding remarks and the future directions are given in section V.
\section{System Model}
%\begin{center}
%\includegraphics[width=9cm,height=7.3cm]{}
%\captionof{figure}{A hybrid RF/FSO system with PRS}
%\label{fig1}
%\end{center}
Our system consists of source ($S$), destination ($D$) and $N$ parallel relays wirelessly connected to the $S$ and $D$ shown by Fig.~1. For a given transmission, the source $S$ receives periodically the CSIs ($\gamma_{1(l)}$ for \textit{l} = 1\ldots \textit{N}) of the first hop from the \textit{N} relays and sorts them in an increasing order of magnitude as follows: $\gamma_{1(1)}\leq\gamma_{1(2)}\leq \ldots \leq\gamma_{1(N)}$. The perfect scenario is to select the best relay \textit{(m = N)} but this best one is not always available. In this case, $S$ will select the next best available relay. Consequently, PRS protocol selects the \textit{m}th worst or \textit{(N - m)}th best relay $R_{(m)}$. Given that the feedback sent from the relays to $S$ is susceptible to the delay, the CSI at the time of selection is different from the CSI at the instant of transmission. In this case, outdated CSI should be assumed instead of perfect CSI estimation. Hence, the instantaneous CSI used for relay selection $\tilde{\gamma}_{1(m)}$ and the instantaneous CSI $\gamma_{1(m)}$ used for transmission are correlated with the time correlation coefficient $\rho$.
\begin{center}
\includegraphics[width=8.5cm,height=6cm]{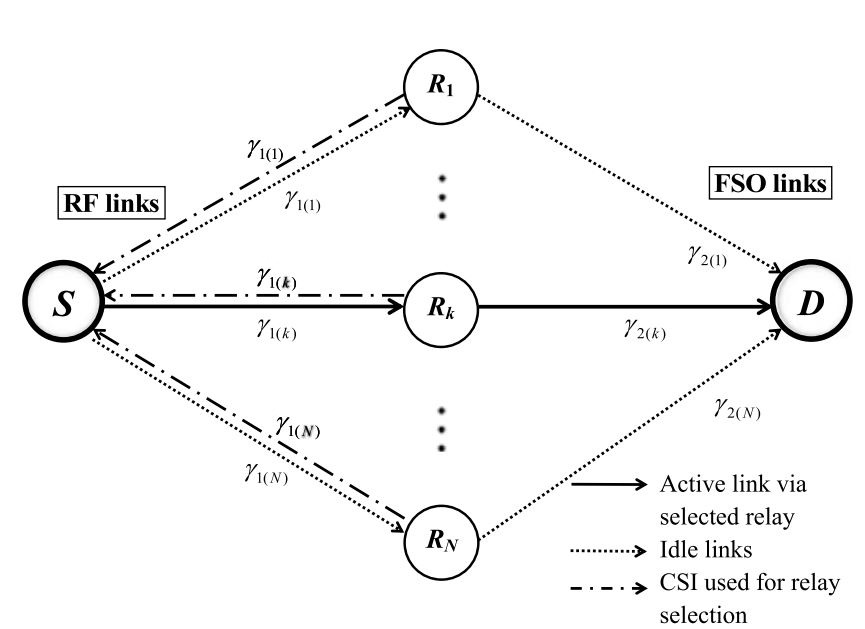}
\captionof{figure}{Mixed RF/FSO system with partial relay selection}
\label{fig1}
\end{center}
The received signal at the $m$th relay is given by
\begin{equation}
y_{1(m)} = h_m s + \nu_1,
\end{equation}
where $s \in \mathbb{C}$ is the information signal, $h_m$ is the RF fading between $S$ and $R_{(m)}$ and $\nu_1$  $\backsim$ $\mathcal{CN}$ (0, $\sigma^2_0$) is the AWGN of the RF channel.
\subsection{Relay's Power Amplifier non-linearity}
PA non-linearity impairment is introduced to the relays. The amplification of the signal happens in two time slots. In the first slot, the received signal at the relay $R_{(m)}$ is amplified by a proper gain $G$ as ${\phi_m} = G y_{1(m)}$. The gain $G$ can be defined as
\begin{equation}
G = \sqrt{\frac{\sigma^2}{\e{|h_m|^2} P_1 + \sigma^2_0}},
\end{equation}
where $\e{\cdot}$ is the expectation operator, $P_1$ is the average transmitted power from $S$ and $\sigma^2$ is the mean power of the signal at the output of the relay block. In the second time slot, the signal passes through a non-linear circuit $\psi_m = f(\phi_m)$.\\
The PA (Power Amplifier) of the relay is assumed to be memoryless. A memoryless PA is characterized by both Amplitude to Amplitude (AM/AM) and Amplitude to Phase (AM/PM) characteristics. The functions AM/AM and AM/PM transform the signal distortion respectively to $A_m(|\phi_m|)$ and $A_p(|\phi_m|)$ and then the output signal of the non-linear PA circuit is given by
\begin{equation}
\psi_m = A_m(|\phi_m|)~e^{j(\text{arg}(\phi_m)+A_p(|\phi_m|))},  
\end{equation}
where $\text{arg}(\phi_m)$ is the polar angle of the complex signal $\phi_m$. The characteristic functions of the SEL and TWTA impairments models are respectively given by \cite{17}
$$
A_m(|\phi_m|) =
\begin{cases}
|\phi_m| & \text{if}~~|\phi_m|~<~A_{sat} \\   
A_{sat} & \text{otherwise}
\end{cases}
\\
~,~~~~A_p(|\phi_m|) = 0,
$$
\begin{equation}
A_m(|\phi_m|) = \frac{A^2_{sat}|\phi_m|}{A^2_{sat} + |\phi_m|^2}~~,~~~~A_p(|\phi_m|)=\frac{\Phi_0~|\phi_m|^2}{A^2_{sat} + |\phi_m|^2},
\end{equation}
$A_{sat}$ is called the input saturation magnitude and $\Phi_0$ controls the maximum phase rotation. From a given saturation level $A_{sat}$, the relay's power amplifier operates at an input back-off (IBO), which is defined by IBO = $\frac{A^2_{sat}}{\sigma^2}$.\\
According to Bussgang Linearization theory \cite{17}, the output of the non-linear PA circuit linearly depends on both the linear scale $\delta$ of the input signal and a non-linear distortion $d$ which is uncorrelated with the input signal and follows the circularly complex Gaussian random variable $d \backsim \mathcal{CN} (0,~\sigma^2_d)$. Then, the AM/AM characteristic $A_m(|\phi_m|)$ can be expressed as follows
\begin{equation}
A_m(|\phi_m|) = \delta~x + d,    
\end{equation}
Regarding the SEL NLPA model, $\delta$ and $\sigma^2_d$ can be written as follows
\begin{equation}
\begin{split}
    &\delta = 1 - \exp\left(-\frac{A^2_{sat}}{\sigma^2}\right) + \frac{\sqrt{\pi} A_{sat}}{2\sigma^2}~\erfc\left(\frac{A_{sat}}{\sigma}\right),\\&
    \sigma^2_d = \sigma^2~\left[1 - \exp\left(-\frac{A^2_{sat}}{\sigma^2}\right) - \delta^2 \right],
\end{split}    
\end{equation}
The clipping factor $\xi$ of the SEL model is given by
\begin{equation}
\xi = 1 - \exp\left(-\frac{A^2_{sat}}{\sigma^2}\right),
\end{equation}
For the TWTA model, if the AM/PM effect of the characteristic $A_p(|\phi_m|)$ is neglected (i.e., $\Phi_0 \approx 0$), $\delta$ and $\sigma^2_d$ can be written as follows
\begin{equation}
\begin{split}
&\delta = \frac{A^2_{sat}}{\sigma^2}\left[1 + \frac{A^2_{sat}}{\sigma^2}\exp\left(\frac{A^2_{sat}}{\sigma^2}\right)  + \text{Ei}\left(-\frac{A^2_{sat}}{\sigma^2} \right)   \right],\\&
\sigma^2_d = -\frac{A^4_{sat}}{\sigma^2}\left[\left(1 + \frac{A^2_{sat}}{\sigma^2}\right)e^{\frac{A^2_{sat}}{\sigma^2}}\text{Ei}\left(-\frac{A^2_{sat}}{\sigma^2} \right) + 1 \right]-\sigma^2\delta^2,
\end{split}
\end{equation} 
where $\text{Ei}(\cdot)$ is the exponential integral function.\\
The clipping factor of the TWTA is given by
\begin{equation}
\xi = -\frac{A^4_{sat}}{\sigma^4}\left[\left(1+\frac{A^2_{sat}}{\sigma^2}\right)\exp\left(\frac{A^2_{sat}}{\sigma^2}\right)\text{Ei}\left(-\frac{A^2_{sat}}{\sigma^2} \right) + 1 \right],
\end{equation}
Then at the relay $R_{(m)}$, the RF amplified signal is converted to an optical one which is given by \cite{2}
\begin{equation}
r_{m} = G(1 + \eta \psi_m),
\end{equation}
where $\eta$ is the electrical-to-optical conversion coefficient.
\subsection{In-Phase and quadrature-phase imbalance at the destination}
In case of perfect IQ mismatch, the received signal at the destination can be expressed as follows
\begin{equation}
y_{2(m)} = I_m G \eta \psi_m + \nu_2,
\end{equation}
where $I_m$ is the optical irradiance between the relay $R_{(m)}$ and the destination $D$, $\eta$ is the optical-to-electrical conversion coefficient, and$ \nu_2$  $\backsim$ $\mathcal{CN}$ (0, $\sigma^2_0$) is the AWGN of the optical channels.\\
Given that the destination is affected by IQ imbalance, the received signal is given by
\begin{equation}
\hat{y}_{2(m)} = \omega_1 y_{2(m)} + \omega_2 (y_{2(m)})^*,
\end{equation}
where $(y_{2(m)})^*$ is called the mirror signal introduced by the IQ imbalance at $D$ and the coefficients $\omega_1$ and $\omega_2$ are respectively given by
\begin{equation}
\begin{split}
\omega_1 = \frac{1 + \zeta e^{-j\theta}}{2},~~
\omega_2 = \frac{1 - \zeta e^{j\theta}}{2},
\end{split}
\end{equation}
where $\theta$ and $\zeta$ are respectively the phase and the magnitude imbalance. This impairment is modeled by the Image-Leakage Ratio (ILR), which is given by $\text{ILR} = \left|\frac{\omega_1}{\omega_2}\right|^2$.\\
For an ideal $D$, $\theta = 0, \zeta = 1, \omega_1 = 1, \omega_2 = 0$, and ILR = 0.
\subsection{Channels Models}
Since the RF channels are subject to correlated Rayleigh fading, the probability density function (PDF) and the cumulative distribution function (CDF) of the instantaneous RF SNR $\gamma_{1(m)}$ are respectively given by \cite{2}
\begin{equation}
\begin{split}
f_{\gamma_{1(m)}}(x) = m{N \choose m}\sum_{n=0}^{m-1} \frac{(-1)^n}{[(N-m+n)(1-\rho)+1]\overline{\gamma}_1}\\ 
\times~{m-1 \choose n} \exp\left(-\frac{ (N-m+n+1)x }{((N-m+n)(1-\rho)+1)\overline{\gamma}_1}\right),
\end{split}
\end{equation}
\begin{equation}
\begin{split}
F_{\gamma_{1(m)}}(x) = 1 - m{N \choose m}\sum_{n=0}^{m-1} \frac{(-1)^n}{N-m+n+1}~~~~~~~~~~\\~ 
\times{m-1 \choose n} \exp\left(-\frac{ (N-m+n+1)x }{((N-m+n)(1-\rho)+1)\overline{\gamma}_1}\right),
\end{split}
\end{equation}
Since the instantaneous SNR $\gamma_{2(m)}$ experiences Gamma-Gamma fading, its PDF is given by
%\begin{equation}
%f_{\gamma_{2(m)}}(\gamma_2) = \frac{\alpha\mu^{\mu}\gamma_2^{(\frac{\alpha\mu}{2}-1)}}{2\Gamma(\mu)\overline{\gamma}_2^{\frac{\alpha\mu}{2}}}\exp\left[-\mu\left(\frac{\gamma_2}{\overline{\gamma}_2}\right)^{\frac{\alpha}{2}}\right],    
%\end{equation}
\begin{equation}
f_{\gamma_{2(m)}}(x) = \frac{(\alpha\beta)^{\frac{\alpha+\beta}{2}} x^{\frac{\alpha+\beta }{4}-1}}{\Gamma(\alpha)\Gamma(\beta)\overline{\gamma}_2^{\frac{\alpha+\beta}{4}}} K_{\alpha-\beta}\left(2\sqrt{\alpha\beta\sqrt{\frac{x}{\overline{\gamma}_2}}}\right),
\end{equation}
where $K_{\nu}(\cdot)$ is the $\nu$-th order modified Bessel function of the second kind, $\alpha$ and $\beta$ are respectively the small-scale and large-scale of the scattering process in the atmospheric environment. These parameters are given by
\begin{equation}
\begin{split}
\alpha = \left( \exp\left[ \frac{0.49 \sigma_R^2}{(1+1.11\sigma_R^{\frac{12}{5}})^{\frac{7}{6}}}\right] -1\right)^{-1}, \\
\beta = \left( \exp\left[ \frac{0.51 \sigma_R^2}{(1+0.69\sigma_R^{\frac{12}{5}})^{\frac{5}{6}}}\right] -1\right)^{-1},
\end{split}
\end{equation}
where $\sigma_R^2$ is called Rytov variance which is a metric of the atmospheric turbulence intensity.
\subsection{End-to-end signal-to-noise-plus-distortion ratio (SNDR)}
%The instantaneous SNRs of the first and the second hops are respectively given by:
%\begin{equation}
%\begin{split}
%\gamma_{1(m)} = \frac{P_1|h_m|^2}{\sigma_0^2} = |h_m|^2\overline{\gamma}_1 \\ 
%\gamma_{2(m)} = \frac{\eta^2P_2|I_m|^2}{\sigma_0^2} = |I_m|^2\overline{\gamma}_2
%\end{split}
%\end{equation}
%where $P_2$ is the relay output power given by $P_2 = \xi \sigma^2$, $\overline{\gamma}_1 = \frac{P_1}{\sigma_0^2}, \overline{\gamma}_2 = \frac{P_2\eta^2}{\sigma_0^2}$ are respectively the average SNRs of the first and the second hops.\\
The average SNR of the first hop is given by
\begin{equation}
\overline{\gamma}_1 = \frac{P_1|h_m|^2}{\sigma_0^2},    
\end{equation}
While the average SNR $\overline{\gamma}_2$\footnote[1]{The average SNR $\overline{\gamma}_2$ is defined as $\overline{\gamma}_2 = \eta^2\e{I_{m}^2}/\sigma_{0}^2$, while the average electrical SNR $\mu_2$ is given by $\mu_2 = \eta^2\e{I_{m}}^2/\sigma_{0}^2$. Therefore, the relation between the average SNR and the average electrical SNR is trivial given that $ \frac{\e{I^2_{m}}}{\e{I_{m}}^2} = \sigma^2_{\text{si}} + 1$, where $\sigma^2_{\text{si}}$ is the scintillation index \cite{scin}.} of the second hop can be expressed as
\begin{align}
 \overline{\gamma}_2 = \frac{\e{I^2_{m}}}{\e{I_{m}}^2}\mu_2,   
\end{align}
where $\mu_2$ is the average electrical SNR given by
\begin{equation}
    \mu_2 = \frac{\eta^2\e{I_{m}}^2}{\sigma_{0}^2},
\end{equation}
According to \cite[Eq.~(16)]{22}, the end-to-end SNDR is given by Eq.~(21).
\section{Performance Analysis}
In this section, we present the analysis of the OP, the BER and the EC. We will show that the OP and BER are limited by irreducible floors and the EC is finite and saturated by a ceiling at the high SNRs values. The floors and the ceiling are certainly caused by the hardware impairments originating from the relays and the destination.\\
\subsection{Outage Probability Analysis}
The outage probability is defined as the probability that the end-to-end SNDR falls below a given outage threshold $\gamma_{\text{th}}$. It can be written as follows
\begin{equation}
P_{\text{out}}(\gamma_{\text{th}}) \delequal \text{Pr}[\gamma_{\text{e2e}} < \gamma_{\text{th}}],
\end{equation}
where \text{Pr($\cdot$)} is the probability notation. The analytical expression of the SNDR given by Eq.~(21) should be placed in Eq.~(18). After some algebraic manipulations, the OP can be expressed by Eq.~(22). Note that the CDF $F_{\gamma_{1(m)}}$ is defined only if $1 - \text{ILR}\gamma_{\text{th}} > 0$, otherwise it is equal to a unity.
%\newpage
The term $\kappa$ is the ratio between the received SNR and the average transmitted SNDR at the relay which is given by
\begin{equation}
\kappa = 1 + \frac{\sigma_d^2}{\delta^2G^2\sigma_0^2},
\end{equation}
Note that the OP is equal to Eq.~(24) for $\gamma_{\text{th}} < \frac{1}{\text{ILR}}$, otherwise, it is equal to a unity.\\
Since the expression of the outage probability involves complex function such as the Meijer-G function, we need to derive an asymptotic high SNR expression to unpack engineering insights about the system gain. Given that the outage performance saturates at high SNR by the outage floor caused by the hardware impairments, it is trivial to conclude that the diversity gain $G_d$ is equal to zero. For an ideal hardware and after expanding the Meijer-G function at high SNR using \cite[Eq.~(07.34.06.0001.01)]{24}, it can be shown that the diversity gain is given by
\begin{equation}
G_d = \min\left(1,~\frac{\alpha}{2},~\frac{\beta}{2}\right),  
\end{equation}
\subsection{Average Bit Error Rate}
The BER can be expressed as follows
\begin{equation}
\overline{P_e} = \frac{q^p}{2\Gamma(p)}\int\limits_{0}^{\infty}\gamma^{p-1}e^{-q\gamma}~F_{\gamma_{\text{e2e}}}(\gamma)~d\gamma,  
\end{equation}
where $F_{\gamma_{\text{e2e}}}(\cdot)$ is the CDF of $\gamma_{\text{e2e}}$, $p$ and $q$ are the parameters that indicate the modulation format, respectively. As we mentioned earlier, the mathematical terms related to the impairments render the integral calculus very complex. As a result, deriving a closed-form of the BER is not possible. In this case, a numerical integration is required. Note that a floor occurs at high SNRs values which prevents the BER from converging to zero. This floor will be shown graphically later in the section of numerical results.
\newpage
\begin{strip}
\noindent\rule{18.5cm}{1pt}
\vspace*{0.5cm}
\begin{equation}
\gamma_{e2e} = \frac{\gamma_{1(m)}\gamma_{2(m)}}{\text{ILR}\gamma_{1(m)}\gamma_{2(m)}+(1+\text{ILR})\kappa\gamma_{2(m)}+(1+\text{ILR})(\e{\gamma_{1(m)}} + \kappa)},
\end{equation}
%\hline
%\vspace*{0.5cm}
%\begin{equation}
%P_{\text{out}}(\gamma_{\text{th}}) = \bigints_{0}^{\infty} F_{\gamma_{1(m)}}\left( \gamma_{1(m)} < \frac{(1 + \text{ILR})\mathcal{L}(G_\text{f})\gamma_{\text{th}}}{1-\text{ILR}\gamma_{\text{th}}} + \frac{(1+\text{ILR})(\e{\gamma_{1(m)}} + \mathcal{L}(G_\text{f}))\gamma_{\text{th}}}{(1-\text{ILR}\gamma_{\text{th}})\gamma_{2(m)}} \right) f_{\gamma_{2(m)}}(\gamma_{2(m)})~d\gamma_{2(m)}
%\end{equation}
%%\vspace*{0.5cm}
\noindent\rule{18.5cm}{1pt}
%\vspace*{0.5cm}
\begin{equation}
\begin{split}
P_{\text{out}}(\gamma_{\text{th}}) =& 1 - \frac{2^{\alpha+\beta-2}}{\pi\Gamma(\alpha)\Gamma(\beta)}m{N \choose m} \sum_{n=0}^{m-1} {m-1 \choose n}\frac{(-1)^n}{N-m+n+1}
\exp\left(-\frac{(N-m+n+1)\kappa(1+\text{ILR})\gamma_{\text{th}}}{((N-m+n)(1-\rho)+1)(1-\text{ILR}\gamma_{\text{th}})\overline{\gamma}_1}\right)\\&
\times~\MeijerG[\Bigg]{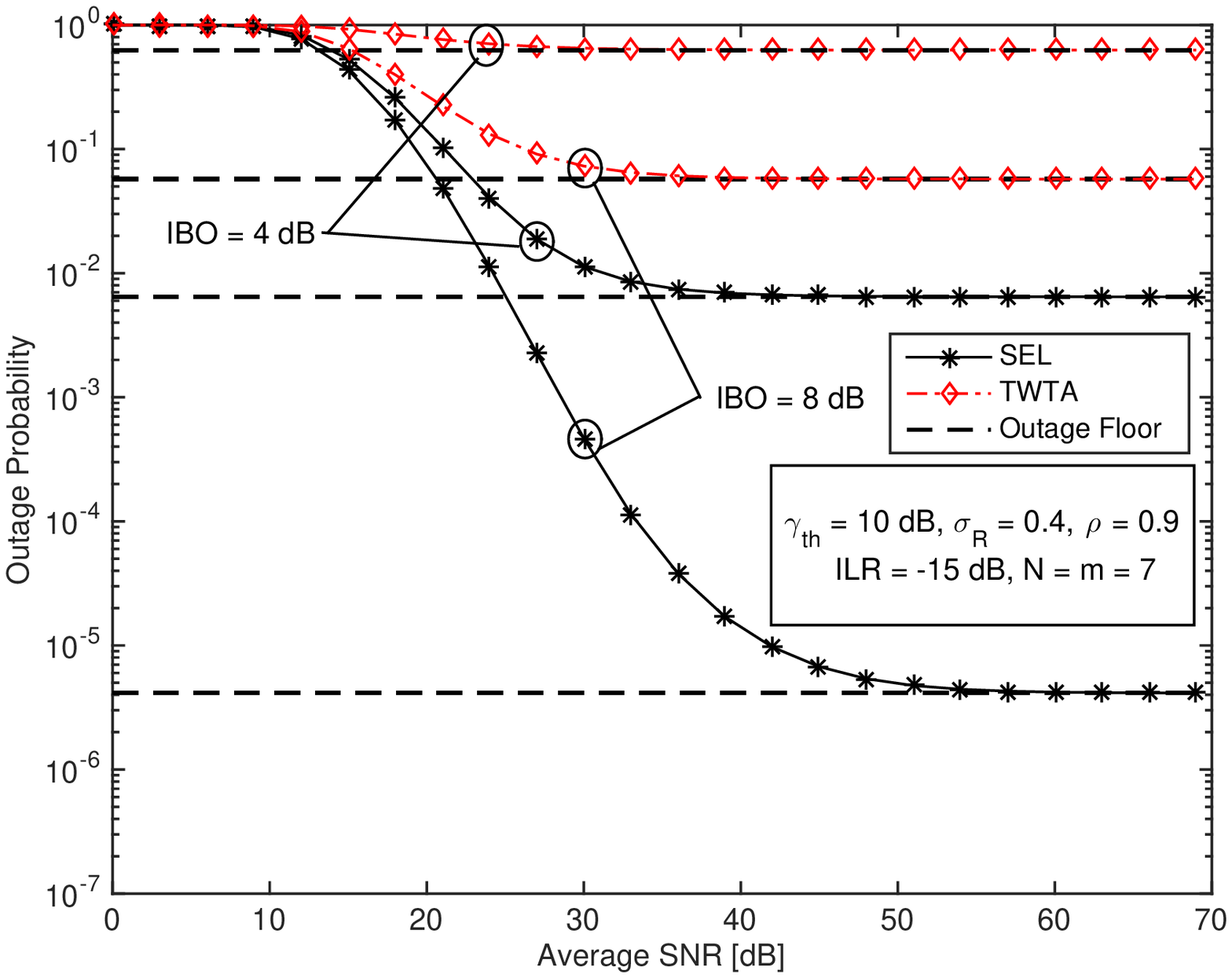}{0}{0}{5}{-}{\frac{\alpha}{2}, \frac{\alpha+1}{2}, \frac{\beta}{2}, \frac{\beta+1}{2}, 0}{\frac{(\alpha\beta)^2(\e{\gamma_{1(m)}}+\kappa)(N-m+n+1)\gamma_{\text{th}}}{16((N-m+n)(1-\rho)+1)(1-\text{ILR}\gamma_{\text{th}})\overline{\gamma}_1 \overline{\gamma}_2}},
\end{split}
\end{equation}
%\vspace*{0.5cm}
\noindent\rule{18.5cm}{1pt}
\end{strip}
\vspace*{-1cm}
\subsection{Ergodic Capacity}
The ergodic capacity, expressed in bit/s/Hz, is defined as the  maximum error-free data transferred by the channel of the system. It can be written as follows
\begin{equation}
\overline{C} \delequal \frac{1}{2}\e{\log_{2}(1+\gamma_{\text{e2e}})},
\end{equation}
The capacity can be calculated by deriving the PDF of the SNDR. However, an exact closed-form of Eq.~(26) is very difficult due to the mathematical terms related to the impairments. To evaluate the system capacity, we should refer to the numerical integration.\\
In spite of the difficulty to calculate an exact closed-form of the EC, we can derive a simpler expression by referring to the approximation given by \cite[Eq.~(27)]{22}
\begin{equation}
\e{\log_2\left(1 + \frac{\psi}{\varphi}\right)} \approx \log_2\left(1+\frac{\e{\psi}}{\e{\varphi}} \right),
\end{equation}
For high SNR values, the SNDR converges to $\gamma^*$ defined by
\begin{equation}
\lim_{\overline{\gamma}_1,\overline{\gamma}_2\to\infty}\gamma_{\text{e2e}} = \frac{1}{\frac{(1+\text{ILR})\xi}{\delta}-1} = \gamma^*,
\end{equation}
\newtheorem{thm}{Theorem}
\newtheorem{cor}[thm]{Corollary}
\begin{cor}
Suppose that $\overline{\gamma}_1$ and $\overline{\gamma}_2$ converge to infinity and the electrical and optical channels are independent, the ergodic capacity converges to a capacity ceiling defined by
\begin{equation}
\overline{C}^* = \frac{1}{2}\log_2(1+\gamma^*), 
\end{equation}
\end{cor}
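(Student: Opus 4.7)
The plan is to combine the high-SNR limit of the SNDR given by Eq.~(29) with the ratio-of-expectations approximation from Eq.~(27). I would first apply that approximation with $\psi=\gamma_{1(m)}\gamma_{2(m)}$ and $\varphi$ equal to the denominator of the SNDR in Eq.~(21), yielding $\overline{C}\approx \tfrac{1}{2}\log_2\bigl(1+\e{\psi}/\e{\varphi}\bigr)$. The channel-independence hypothesis then lets me factor $\e{\psi}=\e{\gamma_{1(m)}}\e{\gamma_{2(m)}}$ and expand $\e{\varphi}=\text{ILR}\,\e{\gamma_{1(m)}}\e{\gamma_{2(m)}}+(1+\text{ILR})\kappa\,\e{\gamma_{2(m)}}+(1+\text{ILR})(\e{\gamma_{1(m)}}+\kappa)$, where all marginal moments are directly computable from the PDFs in Eqs.~(14) and~(16).

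Next, I would take $\overline{\gamma}_1,\overline{\gamma}_2\to\infty$ inside the ratio. Dividing numerator and denominator by $\e{\gamma_{1(m)}}\e{\gamma_{2(m)}}$, the trailing constant $\e{\gamma_{1(m)}}+\kappa$ becomes negligible because it is scaled by a product that diverges as $\overline{\gamma}_1\overline{\gamma}_2$. The surviving contribution from $(1+\text{ILR})\kappa\,\e{\gamma_{2(m)}}$ reduces to the factor $\kappa/\e{\gamma_{1(m)}}$, which tends to a finite positive constant: indeed $\kappa=1+\sigma_d^2(\overline{\gamma}_1+1)/(\delta^2\sigma^2)$ and $\e{\gamma_{1(m)}}$, computed from Eq.~(14), both scale linearly in $\overline{\gamma}_1$. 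Matching the resulting denominator against the definition of $\gamma^*$ in Eq.~(29) identifies the limit of $\e{\psi}/\e{\varphi}$ as $\gamma^*$, and therefore $\overline{C}\to\tfrac{1}{2}\log_2(1+\gamma^*)=\overline{C}^*$.

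The main obstacle I expect is the bookkeeping in this limit. One must carefully track the $\overline{\gamma}_1$-scaling of $\kappa$ through the fixed relay gain, substitute the identity $\sigma_d^2=\sigma^2(\xi-\delta^2)$ for SEL (and the analogous TWTA relation implicit in Eqs.~(8)--(9)), and use the closed-form first moment of $\gamma_{1(m)}$ obtained by integrating Eq.~(14) against $x$, so that the resulting algebraic expression collapses exactly to $1/\gamma^*=(1+\text{ILR})\xi/\delta-1$. A cleaner alternative, if Eq.~(29) is read as an in-probability convergence of the SNDR to the constant $\gamma^*$, is to note that $\gamma_{e2e}\le 1/\text{ILR}$ almost surely, since the denominator in Eq.~(21) dominates $\text{ILR}\,\gamma_{1(m)}\gamma_{2(m)}$; hence $\log_2(1+\gamma_{e2e})$ is uniformly bounded, and the dominated convergence theorem passes the limit through the expectation, delivering the ceiling directly and bypassing the approximation of Eq.~(27) altogether.
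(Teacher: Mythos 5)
Your fallback argument is exactly the paper's proof: the paper disposes of the corollary in one line by invoking Eq.~(28) (convergence of the SNDR to the constant $\gamma^*$ at high SNR) together with the dominated convergence theorem to pull the limit through the expectation in Eq.~(26). Your observation that $\gamma_{e2e}\le 1/\text{ILR}$ almost surely, because the denominator of Eq.~(21) already contains $\text{ILR}\,\gamma_{1(m)}\gamma_{2(m)}$, supplies the explicit dominating function $\log_2\bigl(1+1/\text{ILR}\bigr)$ that the paper leaves implicit, so on this route your write-up is, if anything, more complete than the original.

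Your primary route, however, should not be presented as the proof. Eq.~(27) is only a heuristic ratio-of-expectations approximation, so a chain that starts from $\overline{C}\approx\frac{1}{2}\log_2\bigl(1+\e{\psi}/\e{\varphi}\bigr)$ can at best motivate the ceiling, not establish it; the corollary is an exact limit statement and needs the DCT step anyway. Moreover, the algebra you defer to ``bookkeeping'' does not obviously collapse to $1/\gamma^*=(1+\text{ILR})\xi/\delta-1$: with $\kappa=1+\sigma_d^2(\overline{\gamma}_1+1)/(\delta^2\sigma^2)$ and $\sigma_d^2=\sigma^2(\xi-\delta^2)$ for SEL, the surviving term is $(1+\text{ILR})\,\kappa/\e{\gamma_{1(m)}}\to(1+\text{ILR})\,\frac{\xi-\delta^2}{\delta^2}\cdot\frac{\overline{\gamma}_1}{\e{\gamma_{1(m)}}}$, where $\e{\gamma_{1(m)}}/\overline{\gamma}_1$ is an order-statistics constant determined by Eq.~(14) rather than $1$, so matching this to the form of $\gamma^*$ in Eq.~(28) requires a verification you have not carried out. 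Keep the bounded/dominated convergence argument as the proof and, if you wish, mention the Eq.~(27) computation only as a consistency check.
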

\begin{proof}
Since the SNDR converges to $\gamma^*$ as the average SNRs of the first and second hop largely increase, the dominated convergence theorem allows to move the limit inside the logarithm function.
\end{proof}
If the relaying system is linear, i.e, the system is only impaired by IQ imbalance, the SNDR and the average capacity are saturated at the high SNR regime as follows
\begin{equation}
\gamma^* = \frac{1}{\text{ILR}},~~\overline{C}^{*} = \frac{1}{2}\log_2\left(1+\frac{1}{\text{ILR}}\right),
\end{equation}
To characterize the EC, it is possible to derive the expression of the upper bound stated by the following theorem.
\newtheorem{theorem}{Theorem}
\begin{theorem}
For asymmetric (Rayleigh/Gamma-Gamma) fading channels, the ergodic capacity $\overline{C}$ for non-ideal hardware is upper bounded by
\begin{equation}
    \overline{C} \leq \frac{1}{2}\log_{2}\left(1 + \frac{\mathcal{J}}{\text{ILR}~\mathcal{J} + 1}\right)
\end{equation}
\end{theorem}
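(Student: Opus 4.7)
The plan is to rewrite the end-to-end SNDR of Eq.~(21) as $\gamma_{e2e} = T/(\text{ILR}\,T+1)$ for a suitable scalar random variable $T$, and then apply Jensen's inequality to an auxiliary function that I will show is concave in $T$. Concretely, I would collect the denominator of Eq.~(21) as $\text{ILR}\,\gamma_{1(m)}\gamma_{2(m)} + C$, where $C = (1+\text{ILR})[\kappa\gamma_{2(m)} + \e{\gamma_{1(m)}} + \kappa]$ is positive and, crucially, involves only the second-hop random variable (the first-hop dependence enters through the deterministic constant $\e{\gamma_{1(m)}}$). Dividing numerator and denominator by $C$ gives
\begin{equation*}
\gamma_{e2e} = \frac{T}{\text{ILR}\,T + 1}, \qquad T \delequal \frac{\gamma_{1(m)}\gamma_{2(m)}}{(1+\text{ILR})\bigl[\kappa\gamma_{2(m)} + \e{\gamma_{1(m)}} + \kappa\bigr]}.
\end{equation*}

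Next I would study the scalar map $h(t) \delequal \log_2((1+\text{ILR})t+1) - \log_2(\text{ILR}\,t+1)$ on $[0,\infty)$. A one-line differentiation, with the linear terms canceling in the numerator, yields $h'(t) = \bigl[\ln 2 \cdot ((1+\text{ILR})t+1)(\text{ILR}\,t+1)\bigr]^{-1}$, which is strictly positive and strictly decreasing in $t$. Hence $h$ is concave on the non-negative reals. Since $\log_2(1+\gamma_{e2e}) = h(T)$, substituting into Eq.~(26) and applying Jensen's inequality gives
\begin{equation*}
\overline{C} = \tfrac{1}{2}\,\e{h(T)} \leq \tfrac{1}{2}\,h(\e{T}) = \tfrac{1}{2}\log_2\!\left(1 + \frac{\mathcal{J}}{\text{ILR}\,\mathcal{J}+1}\right),
\end{equation*}
with $\mathcal{J} \delequal \e{T}$, which is the claimed bound.

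The main obstacle, and the step where the bulk of the technical work lives, is obtaining a closed-form expression for $\mathcal{J}$. Using the assumed independence between the two hops (as in Corollary~1) together with the PDFs in Eqs.~(14)--(16), the first-hop moment $\e{\gamma_{1(m)}}$ reduces to a finite sum coming directly from the exponential structure of Eq.~(14). The non-trivial piece is an expectation of the form $\e{\gamma_{2(m)}/(\kappa\gamma_{2(m)} + B)}$ taken against the Gamma-Gamma PDF of Eq.~(16), where $B$ aggregates the relay constants. I would attack this the same way Eq.~(22) was obtained: express both the rational factor $x/(\kappa x+B)$ and the Bessel-$K$ kernel as Meijer G-functions, then invoke the Mellin-Barnes product rule to collapse the integral to a single Meijer G-function. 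This last manipulation is the most technical part of the argument; the concavity/Jensen backbone itself is short.
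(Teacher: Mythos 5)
Your argument is correct and is precisely the route the paper implicitly relies on (it states the theorem without writing the proof out, deferring to the methodology of \cite{22}): rewrite $\gamma_{\text{e2e}}=T/(\text{ILR}\,T+1)$, observe that $h(t)=\log_2\bigl(1+\tfrac{t}{\text{ILR}\,t+1}\bigr)$ has derivative $h'(t)=\bigl[\ln 2\,((1+\text{ILR})t+1)(\text{ILR}\,t+1)\bigr]^{-1}>0$ and decreasing, hence is increasing and concave, and apply Jensen's inequality to get $\overline{C}\leq\tfrac{1}{2}h(\e{T})$. One remark: your $\mathcal{J}=\e{T}$, with denominator $(1+\text{ILR})\bigl[\kappa\gamma_{2(m)}+\e{\gamma_{1(m)}}+\kappa\bigr]$, differs from the paper's Eq.~(32), which carries an extra additive $\text{ILR}\,\gamma_{2(m)}$; however, the closed form in Eq.~(33) (prefactor $1/((1+\text{ILR})\kappa)$ and Meijer-G argument $(\alpha\beta)^2(\e{\gamma_{1(m)}}+\kappa)/(16\kappa\overline{\gamma}_2)$) is consistent with your definition, so the extra term in (32) appears to be a typo and your $\mathcal{J}$ is the one the Jensen step actually supports.
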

where $\mathcal{J}$ is given by
\begin{equation}
\begin{split}
 \mathcal{J} = \e{\frac{\gamma_{1(m)}\gamma_{2(m)}}{\text{ILR}\gamma_{2(m)} + \tau }},
 \end{split}
\end{equation}
where $\tau = (1+\text{ILR})\kappa\gamma_{2(m)}+(1+\text{ILR})(\e{\gamma_{1(m)}} + \kappa)$.\\
After some mathematical manipulations, $\mathcal{J}$ is given by
\begin{equation}
\begin{split}
&\mathcal{J} = \ddfrac{m{N \choose m}(\alpha\beta)^\frac{\alpha + \beta}{2}\left( \frac{\e{\gamma_{1(m)}} + \kappa}{\kappa}\right)^\frac{\alpha+\beta}{4}}{2\pi(1+\text{ILR})\kappa\Gamma(\alpha)\Gamma(\beta)\overline{\gamma}_2^\frac{\alpha+\beta}{4}}\\&~
\times \sum_{n=0}^{m-1} {m-1 \choose n}\frac{(-1)^m((N-m+n)(1-\rho)+1)\overline{\gamma}_1}{(N-m+n+1)^2}\\&
\times~\MeijerG[\Bigg]{5}{1}{1}{5}{\kappa_0}{\kappa_1}
{\frac{(\alpha\beta)^2(\e{\gamma_{1(m)}}+\kappa)}{16\kappa\overline{\gamma}_2}},
\end{split}
\end{equation}
where $\kappa_0, \kappa_1$ are given by
\begin{equation}
\begin{split}
&\kappa_0 = -\frac{\alpha + \beta}{4},\\&
\kappa_1 = \left[\frac{\alpha - \beta}{4}, \frac{\alpha - \beta + 2}{4}, \frac{\beta - \alpha}{4}, \frac{\beta - \alpha +2 }{4},-\frac{\alpha + \beta}{4}\right],
\end{split}
\end{equation}
\section{Numerical results}
This section presents analytical and numerical \footnote{For all cases, $10^9$ realizations of the random variables were generated to perform the Monte Carlo simulation in MATLAB.} results of the OP, BER and EC obtained from the mathematical expressions mentioned in the previous section.\\
Since the RF channel experiences correlated Rayleigh fading, it can be generated using the algorithm in \cite{26}. The atmospheric turbulence follows Gamma-Gamma fading, which can be generated by using the formula, $I = I_{X}I_{Y}$, where $I_{X}$ and $I_{Y}$ are independent random variables, which follow Gamma distribution.
\begin{table}[!ht]
\renewcommand{\arraystretch}{1.3}
\caption{Simulation Parameters}
\label{tab:example}
\centering
\begin{tabular}{|c|c|}
    \hline 
    \textbf{Parameters}  &  \textbf{Values} \\
    \hline \hline
    Outage threshold $\gamma_{\text{th}}$ [dB]   &   10\\
    %\hline 
    Time correlation $\rho$    &  0.9  \\
    Number of relays $N$    &  7  \\
    Rank of selected relay $k$    &  7  \\
    Rytov variance $\sigma^2_{R}$ & 0.16 \\
    Modulation & BPSK \\
    \hline 
\end{tabular}
\end{table}
\vspace*{-0.5cm}
\begin{center}
\includegraphics[width=8.5cm,height=6cm]{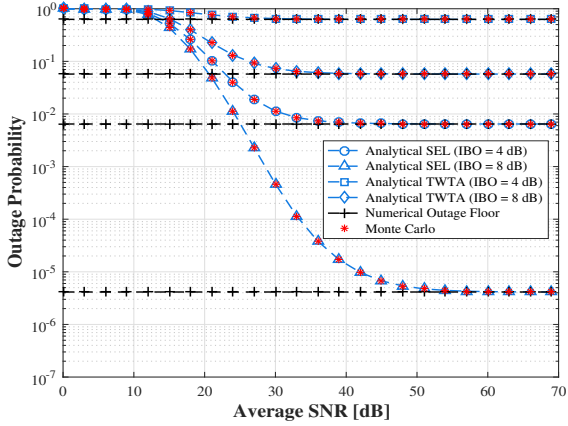}
\captionof{figure}{Outage probability versus the average SNR for different values of IBO}
\label{fig1}
\end{center}
The dependence of the OP with respect to the average SNR for the case of SEL and TWTA NLPA models are shown in Fig.~2. For ILR = -15 dB, this value can be obtained by 1 dB of amplitude imbalance and 15$^{\circ}$ of phase imbalance. We observe that when the relays's system are impaired by SEL, the OP is lower compared to the case of the TWTA impairment. For example, for SNR = 60 dB and IBO = 8 dB, the OPs under the effect of SEL and TWTA are respectively equal to 4~10$^{-6}$ and 6~10$^{-2}$. Moreover, we observe that the system performs better as long as the IBO value increases. As the average SNR per hop increases, the outage floors appear for both cases SEL and TWTA but the system performs better under the effect of SEL than TWTA. Therefore the TWTA has more severe impact on the system performance than the SEL. In addition, our system apprears to be more resilient to the hardware imperfections compared to the system assumed in \cite{22}. In fact, for (IBO, ILR, $\gamma_{\text{th}}$, SNR) equal to (8, -15, 10, 45)[dB] and under the joint effect of SEL and IQ imbalance, the OP of our system is equal to 7~10$^{-6}$. However, the OP of the system suggested by Maletic \textit{et al.} in \cite{22} is equal to 6~10$^{-3}$ shown by Fig.~4. The factors that achieve this significant enhancement of our system over the classical RF system \cite{22} are essentially the FSO technology and the diversity of the RF part characterized by the multiple relays.
\begin{center}
\includegraphics[width=8.5cm,height=6cm]{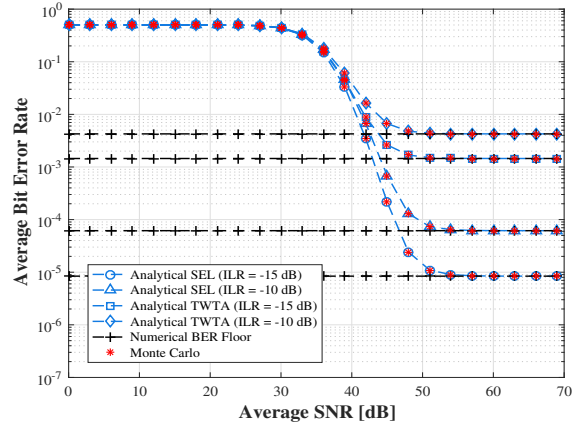}
\captionof{figure}{Average Bit Error Rate versus the average SNR for different values of ILR}
\label{fig1}
\end{center}
\vspace*{-0.5cm}
\begin{center}
\includegraphics[width=8.5cm,height=6cm]{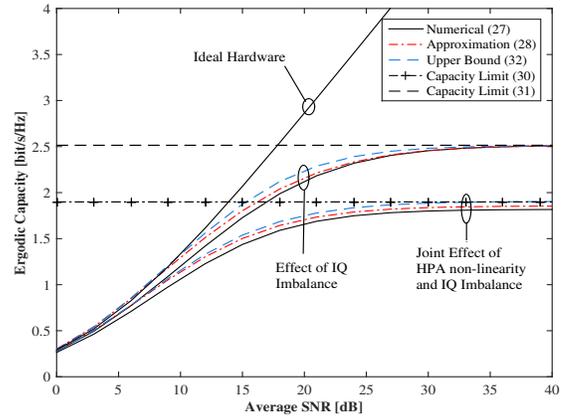}
\captionof{figure}{Exact, approximate and upper bound of the ergodic capacity versus the average SNR}
\label{fig1}
\end{center}
Fig.~3 shows the variations of the BER against the average SNR per hop for different values of the ILR. We clearly observe that the BER is limited by an irreducible floor caused by the joint effect of HPA non-linearity and IQ imbalance. As the ILR value increases, the destination is more susceptible to impairments and hence the BER performance deteriorates further. As a comparison with the work done by \cite{22}, for ILR = -15 dB, IBO = 5 dB, BPSK modulation and assuming the TWTA impairment, the mixed RF/FSO system outperforms the classical RF suggested by \cite{22}. In fact, for an average SNR equal to 45 dB and ILR = -15 dB, the BER performance of our system is equal to 7~10$^{-3}$, however, the BER of the RF system is approximately equal to 1.9~10$^{-2}$ shown by Fig.~9 \cite{22}. Regarding the impact of the SEL impairments, our system again performs better and proves its high resiliency against the imperfections than the RF system. In fact, for the same previous configuration of ILR, IBO and modulation format, the BER is equal to 2~10$^{-4}$ while the BER for RF system is equal to 1.3~10$^{-3}$. Note that even our system is impaired by TWTA, the most severe impairments, there is no significant difference between the BER of mixed RF/FSO and the BER of the full RF system under the effect of the SEL, the less severe impairments. We conclude that our mixed RF/FSO system is more robust to the impairments than the previous RF system due to the advantages brought by the FSO technique.\\
The variations of the EC versus the average SNR hop assuming linear and non-linear relaying (SEL model) with an impaired destination are shown in Fig.~4. As expected, the system operating with linear relaying outperforms the system performance in the case of non-linear relaying. As
the impairments at the relays disappear, the saturation level of the capacity increases but the capacity is still limited by a ceiling superior than the ceiling of the capacity under the joint effect of the NLPA relaying and IQI. The significant difference between the two ECs at high SNR shows clearly the deleterious effect of the high power amplifier non-linearities on the system performance.\\
Note that the capacity ceiling is independent on the system parameters and it depends only on the impairments parameters (ILR, IBO), that is why the ceiling level is still the same for the ceiling suggested by \cite{22}. The main advantage of our system compared with the RF system, is that even though the two systems are limited by the same ceiling level, the mixed RF/FSO system capacity increases faster than the capacity of the RF system.
\section{Conclusion}
In this work, we provided the analysis of various models of impairments and their effects on the system performance. We introduced the SEL and TWTA as HPA non-linearities affecting the relays and we assume that $D$ is impaired by IQ imbalance. We studied the effects of these hardware imperfections on the system performance in terms of OP, BER and EC. We concluded that the system performs better as the IBO increases and the ILR decreases. Moreover, it turned out that the TWTA has more severe impact on the system performance than the SEL model. Furthermore, even though the performance deteriorates under the effects of the imperfections, we noted that the introduction of the FSO technique makes the mixed RF/FSO system more resilient to the hardware impairments than the previous RF relaying system. As future directions, unlike the previous work that developed various techniques for the impairments compensation, we intend to develop an algorithm/technique that must remove completely or at least to a large extent the residual impairments which still cause the performance deterioration.
\bibliographystyle{IEEEtran}
\bibliography{bibliography}
\end{document}